\newtheorem{prop}{Proposition}
\newtheorem{defi}[prop]{Definition}
\newtheorem{rem}[prop]{Remark}
\newtheorem{lem}[prop]{Lemma}
\newcommand{\no}[1]{\mathop{\mathopen: {#1} \mathclose:}}
\def\be             {\begin{equation}}
	\def\ee             {\end{equation}}
\long\def\labl#1   {\label{#1}\ee}
\newcommand{\C}{\mathbb C}
\newcommand{\R}{\mathbb R}
\newcommand{\N}{\mathbb N}
\newcommand{\M}{\mathbb M}
\newcommand{\vol}{{\rm Vol}}
\newcommand{\loc}{{\rm loc}}
\newcommand{\reg}{{\rm reg}}
\newcommand{\mca}{\mu{\rm c}}
\DeclareMathOperator{\diag}{\rm diag}
\DeclareMathOperator{\WF}{\rm WF}
\DeclareMathOperator{\supp}{\rm supp}
\DeclareMathOperator{\Det}{\rm Det}
\newcommand\ria{\rightarrow}
\newcommand{\fA}{\mf A}
\newcommand{\eps}{\varepsilon}
\newcommand{\ph}{\varphi}
\newcommand{\tsf}[2]{{\textstyle{\frac{#1}{#2}}}}
\newcommand{\mf}[1]{\mathfrak{#1}}
\newcommand{\mc}[1]{\mathcal{#1}}
\newcommand{\cD}{\mc D}
\newcommand{\cE}{\mc E}
\newcommand{\cF}{\mc F}
\newcommand{\cT}{\mc T}
\newcommand{\cL}{\mc L}
\newcommand{\cS}{\mc S}
\newcommand{\Ci}{{\mc C}^\infty}
\newcommand{\T}{\cdot_{{}^\cT}}
\newcommand{\TF}{\cdot_{{}^{{\cT}_H}}}
\newcommand{\bTF}{\cdot_{{}^{\overline{{\cT}_H}}}}
\newcommand{\DD}{D_{\mathrm{D}}}
\newcommand{\sgn}{\mathrm{sgn}}
\newcommand{\bs}[1]{\boldsymbol{#1}}
\newcommand{\beqa}{\begin{eqnarray*}}
	\newcommand{\eeqa}{\end{eqnarray*}}
\newcommand{\beqan}{\begin{eqnarray}}
	\newcommand{\eeqan}{\end{eqnarray}}
\newcommand{\beq}{\begin{equation}}
	\newcommand{\eeq}{\end{equation}}
\newcommand{\benu}{\begin{enumerate}}
	\newcommand{\eenu}{\end{enumerate}}
\begin{document}
\title{The Quantum Sine-Gordon model in perturbative AQFT\\
\Large Convergence of the S-matrix and the interacting current}
\author{Dorothea Bahns, Kasia Rejzner}
\date{}
%
%
%
%
\maketitle
\begin{abstract}
We study the Sine-Gordon model with Minkowski signature in the framework of perturbative algebraic quantum field theory. 
We calculate the vertex operator algebra braiding property. We prove that in the  finite regime of the model, the expectation value -- with respect to the vacuum or a Hadamard state --  of the  Epstein Glaser $S$-matrix and the interacting current or the field respectively,  both given as formal power series,  converge. 
\end{abstract}
\section{Introduction}
\label{intro}

Perturbative algebraic quantum field theory (pAQFT) is an approach to perturbation theory in quantum field theory that follows the paradigm of local quantum physics proposed by Haag and Kastler \cite{HK,Haag}. The important feature of this framework is that one separates the construction of the algebra of observables (local aspects of the theory) from the choice of a state (global features). This is of particular importance when generalizing the framework to quantum field theory on curved spacetime as advocated and pioneered in
\cite{BF0,BFV,HW01}. 
Since then, the pAQFT framework has been applied to a wide class of physical problems including quantization of a bosonic string \cite{BRZ,Z16} and effective quantum gravity \cite{BFRej13}. However, as pointed out in \cite{Summers},  up to now it has not been tested on an interacting model for which non-perturbative results exists. The present work means to bridge this gap and to establish convergence results in the massless Sine-Gordon model on the 2-dimensional Minokwski space in the model's ultraviolet-finite regime.  As it turns out, despite working with hyperbolic signature, we can still base our proofs of summability to the proof  established in \cite{Froe76} for a Euclidean version of the model. This way we not only test the robustness of the pAQFT framework, but also provide the first construction of the formal S-matrix in   the massless Sine-Gordon model on $\R^2$ (in the ultraviolet-finite regime) that is performed directly in the Lorentzian signature. We also construct the interacting currents and the interacting field. 

The pAQFT framework allows to construct the local algebras solely based on the fundamental solutions and solutions of the underlying linear (hyperbolic) partial differential equation. No Fock space is needed to calculate e.g. the $S$-matrix, which is given as a formal power series over a certain space of functionals. Moreover, it is not necessary to pass to a Wick rotated Euclidean version of the theory with an underlying elliptic PDE. Renormalization is formulated as a procedure of extending the time ordering map from regular functionals to local functionals, in the spirit of \cite{EG} (i.e. in particular in finite volume). The infrared cutoff is given by a compactly supported test function cutting off the interaction term, and it is usually kept throughout the calculations. The idea behind this is that local measurements do not depend on the particular choice of this test function. To calculate expectation values, a state is chosen, but it is not necessary to take this choice -- which on general manifolds is not canonical -- as the starting point of the construction. 

In the current paper, we work with the Sine-Gordon model in the regime of the coupling constant for which there is  no need for renormalization, so the time ordered products are constructed directly and, as distributions, they are shown to satisfy certain bounds. This makes it unnecessary to use the inductive procedure of Epstein and Glaser, but the formal $S$-matrix that we construct has the same physical interpretation as the object introduced in \cite{EG}.

We use the 2-point function of a Hadamard state for the massless scalar field in 2 dimensions to construct the star product of the free theory and, indirectly, to construct the time-ordered products. Such Hadamard states were given, for example, in \cite{Schubert,DM06}. The choice of the Hadamard state is not unique, but different choices lead to isomorphic algebraic structures.
In order to link our investigation directly to the calculations performed in the Euclidean setting \cite{Froe76,Col75}, we investigate also an alternative approach. We introduce an auxiliary finite mass $m$ and
study the $m\rightarrow 0$ limit of the expectation value of the S-matrix in the vacuum state for the massive theory. The resulting quantity can be interpreted as the expectation value of the S-matrix in a rather singular state, as clarified in sections 4 and 5.

The paper is organized as follows. We start by recalling the essential ideas and tools from pAQFT in the following (second) section. The third section is devoted to the Sine-Gordon model and more particularly, the vertex operators in our framework. In the fourth and fifth section we show summability of the $S$-matrix, the interacting current and the field itself.


\section{The framework of pAQFT}\label{sec:genpAQFT}

Let $\M_D$ be the $D$-dimensional Minkowski spacetime, i.e. $\R^D$ with the diagonal metric $\diag(1,-1,\dots,-1)$ and corresponding inner product denoted by $x \cdot x$. Starting point in the pAQFT construction of models is the classical configuration space $\cE$ that specifies the type of objects we want to describe. In general, it is the space of smooth sections of some vector bundle over $\M_D$. For the scalar field theory, we have $\cE\doteq\Ci(\M_D,\R$). We equip $\cE$ with its standard Frech{\'e}t topology.

Next, we consider the space of smooth functionals on $\cE$ (smoothness understood in the sense of \cite{Bas64,Ham,Mil,Neeb}). Among these, there are some important classes of functionals that are relevant for the construction of models in pAQFT. Firstly, we introduce the notion of local functionals.
\begin{defi}\label{localfunctionals}
	A functional $F\in\Ci(\cE,\C)$  is called local (an element of $\cF_{\loc}$) if for each $\ph_0\in\cE$ there exists an open neighbourhood $V$ of $\varphi_0$ in $\cE$ and $k\in\N$ such that for all $\ph\in V$ we have
	\begin{equation}
		F(\ph)=\int_{\M_D} \alpha(j^k_x(\ph))\ ,
	\end{equation}
	where $j^k_x(\ph)$ is the $k$-th jet prolongation of $\ph$ and $\alpha$ is a density-valued function on the jet bundle.
\end{defi}
The spacetime localisation of a functional is provided by the notion of \textit{spacetime support}
\begin{align}\label{support}
	\supp\, F\doteq\{ & x\in \M_D|\forall \text{ neighbourhoods }U\text{ of }x\ \exists \ph,\psi\in\cE, \supp\,\psi\subset U\,,
	\\ & \text{ such that }F(\ph+\psi)\not= F(\ph)\}\ .\nonumber
\end{align}

The notion of smoothness that we use implies that functional derivatives of a smooth functional 
$F\in\Ci(\cE,\C)$ can be seen as compactly supported distributions, i.e. 
$F^{(n)}(\ph)\in\Gamma'(\M_D^n,\R)^{\C}\equiv\cE'(\M_D^n)^{\C}$, 
where the superscript $\C$ indicates complexification. One can require a stronger condition, i.e. consider functionals whose derivatives are smooth. This motivates the following definition.
\begin{defi}
	A functional $F\in\Ci(\cE,\C)$ is called regular, i.e $F\in\cF_{\reg}$, if $F^{(n)}(\ph)\in\Gamma_c(\M_D^n,\R)^{\C}\equiv\cD(\M_D^n)^{\C}$ for all $n\in\N$, $\ph\in\cE$.
\end{defi}
More generally, one can impose different, less restrictive conditions on the regularity structure of functionals derivatives of functionals, seen as distributions. To describe the singularity structure, it is convenient to use H\"ormander's  wavefront (WF) set \cite{Hoer1}, a refined notion of the singular support of a distribution. It is a subset of the cotangent bundle whose projection onto the base is the singular support of the distribution (and whose covariables give the {\it high frequency cone}). It yields the following very simple sufficient criterion for the existence of products of distributions: if no two covariables (at the same base point) from the two respective wavefront sets can bee added to give 0, the product existst as a distribution.  Later on, we will see that the following class of functionals is a good choice for building models of pAQFT's; for other choices see \cite{BrDa,Yoann1,Yoann2}.
\begin{defi}\label{MicrocausalFunctioanls}
	A functional $F\in\Ci(\cE,\R)$ is called microcausal if it is compactly supported and satisfies
	\be\label{mlsc}
	\WF(F^{(n)}(\ph))\subset \Xi_n,\quad\forall n\in\N,\ \forall\ph\in\cE\,,
	\ee
	where $\Xi_n$ is an open cone defined as 
	\be\label{cone}
	\Xi_n\doteq T^*M^n\setminus\{(x_1,\dots,x_n;k_1,\dots,k_n)| (k_1,\dots,k_n)\in (\overline{V}_+^n \cup \overline{V}_-^n)_{(x_1,\dots,x_n)}\}\,,
	\ee
	where $(\overline{V}_{\pm})_x$ is the closed future/past lightcone understood as a conic subset of
	$T^*_x\M_D$.
\end{defi}
The construction of models in pAQFT starts with the free theory with the equation of motion of the form
\[
P\ph=0\,,
\]
where $P$ is a normally hyperbolic operator. For such operators there exist unique retarded (forward) and advanced (backward) fundamental solutions $\Delta^{\rm R}$, $\Delta^{\rm A}$ respectively. Their difference $\Delta=\Delta^{\rm R}-\Delta^{\rm A}$ is called the commutator function (or the causal propagator). As a distribution, $\Delta$ has WF set of the form
\[
\WF \Delta = \{(x,k;x',-k') \in \dot{T}^*\M^2_2 | (x,k) \sim (x',k')\}\,,
\]
where the equivalence relation $\sim$ means that there exists a null geodesic strip such that both $(x,k)$ and $(x',k')$ belong to it. One can then split \cite{Rad} $\Delta$ as a sum of two distributions
\be\label{dec}
\frac{i}{2}\Delta=W-H
\ee
in such a way that $H$ is symmetric and the WF set of $W$ (interpreted physically as the 2-point function) is 
\[
\WF W = \{(x,k;x',-k') \in \dot{T}^*\M^2_2 | (x,k) \sim (x',k'), k\in (\overline{V}_+)_x\}\,.
\]
The latter condition allows us to introduce the following non-commutative product  
for microcausal functionals 
$F,G\in \cF_{\mca}$,
\be\label{starprod}
(F\star_H G)(\ph)\doteq\sum\limits_{n=0}^\infty \frac{\hbar^n}{n!}\left<F^{(n)}(\ph),W^{\otimes n}G^{(n)}(\ph)\right>\,,\quad W=\tfrac{i}{2}\Delta+H\,.
\ee 
which gives a star product in the sense of formal power series on $\cF_{\mca}[[\hbar]]$.
The decomposition \eqref{dec} is in general not unique and different choices of $W$ are labeled by its different symmetric parts $H$. The difference $H-H'$ between two such choices is a smooth function, so the two star products $\star_H$ and $\star_{H'}$ are related by an equivalence in the sense of formal power series (``gauge transformation'')
$\alpha_H:\cF_{\mca}[[\hbar]]\rightarrow\cF_{\mca}[[\hbar]]$ 
given by 
\[
\alpha_{H-H'}\doteq e^{\frac{\hbar}{2}\cD_{H-H'}}\,,
\]
where, in terms of formal integral kernels (i.e. symbolic notation that makes the calculations easier),
\[
\cD_{H-H'}\doteq \left\langle H-H',\frac{\delta^2}{\delta\ph^2}\right\rangle=\int (H(x,y)-H'(x,y)) \frac{\delta^2}{\delta \varphi(x) \delta \varphi(y)} dxdy\]
and we have
\beq\label{eq:alphaHH}
F \star_{ H'} G = \alpha^{-1}_{H-H'}\left(\alpha_{ H-H'}(F)\star_{ H}\alpha_{ H-H'}(G)\right)\,.
\eeq

Note that the algebraic structure presented here looks the same for arbitrary dimension $D$ of $\M_D$, but the concrete form of $W$, $H$ and $\Delta$ would be different.

Let us now discuss how the star product $\star_H$ allows one to formulate the algebraic version of Wick's theorem and define Wick ordered quantities without using a concrete Hilbert space representation. We follow the construction introduced in \cite{DF,BDF}. Given a regular functional $F\in\cF_{\reg}$,  the formal power series formula for the Wick ordered expression  is 
\be\label{eq:Wick}
\no{F} \ =  e^{-\frac \hbar 2 \cD_H} F  = \alpha_H^{-1}F \in \mc F_{\reg}\,[[\hbar]]
\,,\quad \text { where } \quad \alpha_H\dot= e^{\frac \hbar 2 \cD_H}\,.
\ee
In order to extend this prescription to more general  (in particular non-linear local) functionals, one uses a limiting procedure. Consider  $\cF_{\mca}$ equipped with a topology $\tau_{\mathrm{Hoe}}$, which is a variant of the H\"ormander topology, see~\cite{BDF,BrDaHe,BrDa,Yoann1} for possible definitions of this topology and section 4.4.2 of \cite{Book} for a review.

Define ${\mathfrak A}_{\mca}[[\hbar]]$
as follows: take the completion of $\mc F_{\reg}$ (in the initial topology w.r.t. $\alpha_H$) and then select the subspace of all those $A \in  \overline{\mc F_{\reg}}[[\hbar]]$, $A=\lim A_n$ for some sequence in $\mc F_{\reg}[[\hbar]]$, which satisfy
\[
\alpha_H(A)=\lim_n \alpha_H(A_n)\in\mc F_{\mca} [[\hbar]]\,.
\]
 Now approximate $F\in\cF_{\mca}$ by regular functionals $F=\lim F_n$ where $F_n \in \mc F_{\reg}$ and define the corresponding normally ordered quantity as
\[
{\no{F}}=\lim \alpha_H^{-1}F_n \ \in {\mathfrak A}_{\mca}[[\hbar]]\,.
\]
We denote by $\fA_{\loc}[[\hbar]]$ the subspace of $\fA_{\mca}[[\hbar]]$ consisting of elements that arise as $\no{F}$, where $F\in\cF_{\loc}[[\hbar]]$.

On regular functionals we can introduce a star product that is independent of $H$:
\be\label{starprod2}
(F\star G)(\ph)\doteq\sum\limits_{n=0}^\infty \frac{\hbar^n}{n!}\left<F^{(n)}(\ph),\left(\tfrac{i}{2}\Delta\right)^{\otimes n}G^{(n)}(\ph)\right>\,,\quad F,G\in\cF_{\reg}\,.
\ee 
We then have
\beq\label{eq:AlphStarH}
\alpha_H\left(\no{F}  \star \no{G}\right) = F \star_H G \qquad (\mathrm{or}\quad\no{F}  \star \no{G} =\no{F \star_H G})
\eeq
where 
\[
F \star G =  \mu\circ e^{\frac {i\hbar } 2 D_\Delta }F \otimes G \in \mc F_{\reg}[[\hbar]]
\]
and
\[
F \star_H G = \mu\circ e^{\hbar D_W}F \otimes G 
\in \mc F_{\reg}[[\hbar]]
\]
with
\[
D_K(F\otimes G)= \left\langle K, \frac {\delta F }{\delta \varphi} \otimes \frac {\delta G}{\delta \varphi} \right\rangle = \int K(x,y) \frac{\delta F }{\delta \varphi(x)}\ \frac {\delta G} {\delta \varphi(y)} dxdy \qquad \text{ for kernel } K
\]
and where $\mu$ denotes the pullback along $\varphi \mapsto \varphi \otimes \varphi$.

This works also after the limiting procedure since
\[
\begin{array}{ccc}
(\mc F_{\reg}[[\hbar]], \star) & \stackrel{\alpha_H}{\longrightarrow} & (\mc F_{\reg}[[\hbar]], \star_H) \\
\downarrow && \downarrow\\
({\mathfrak A}_{\mca}[[\hbar]], \star) & \stackrel{\alpha_H}{\longrightarrow} & (\mc F_{\mca}[[\hbar]], \star_H) 
\end{array}
\]
where the down arrows are embeddings hence injective. The relation between $\star_H$ and $\star$ encodes the combinatorics of the Wick theorem. While the star product $\star$ is the ``standard'' product of the quantum theory, one can trade such products of Wick ordered quantities $\no{F}$ and $\no{G}$ for $\star_H$-products of ordinary functionals $F$ and $G$, using formula \eqref{eq:AlphStarH}. This is a big advantage of the pAQFT framework, since it allows us to write down concrete expressions and discuss convergence of formal power series without going to a Hilbert  space (or Krein  space as in~\cite{pier}) representation. The only input is the 2-point function $W$.

 Generally, a Gaussian state on $\mathfrak{A}_{\mca}[[\hbar]]$  with covariance $H$ is defined by evaluation of $\alpha_{H}(A)$ in a configuration $\varphi \in \cE$,
\beq\label{eq:vev}
\omega_{\varphi,H}(A)\doteq \alpha_{H}(A)(\varphi) \qquad \text{ for all } A \in {\mathfrak{A}}_{\mca}[[\hbar]]\,.
\eeq
The choice $\ph=0$ in the above is distinguished by the fact that $\omega_{0,H}$ is then exactly the expectation value in the state whose 2-point function is given by $W=\frac{i}{2}\Delta+H$ (see for example the discussion around formula (67) in \cite{LesHauches}, where the more complicated case of curved spacetimes is treated). As explained in section \ref{sec:genpAQFT}, instead of working in $(\mathfrak{A}_{\mca}[[\hbar]],\star)$ we can work in $(\cF_{\mca}[[\hbar]],\star_H)$ and motivated by the discussion above, we introduce the notation 
\[
F(\ph)=:\langle F\rangle_{\ph}\,,
\] 
for a functional $F\in\cF_{\mca}[[\hbar]]$.

The motivation behind the pAQFT approach is to make precise the Dyson formula for the scattering matrix and interacting fields. Consider scalar field on $D$-dimensional Minkowski spacetime $\M_D$. Recall that, heuristically, the Dyson formula for the interacting  time evolution operator $U_I(t,s)$ is
\[U_I(t,s)=1+\sum_{n=1}^\infty\frac{i^n\lambda^n}{n!}\int_ {([s,t]\times\mathbb R^{D-1})^n}T(\no{\cL_I(x_1)}\dots \no{\cL_I(x_n)})d^{D}x_1\dots d^Dx_n\,,
\]
where $\lambda$ is the coupling constant,  $T$ denotes time-ordering and the interaction Lagrangian $\no{\cL_I}$ is an operator-valued ``function''. This formula suffers from both UV and IR divergences. A way to give it mathematical meaning is to use the framework of Epstein and Glaser \cite{EG}. Here, the IR problem is solved by systematically treating the interaction Lagrangian as an operator valued distribution, and evaluating the $n$-fold time-ordered product $T(\no{\cL_I}\otimes \dots \otimes \no{\cL_I})$ in $g^{\otimes n}$, where $g$ is a compactly supported test function on $\M_D$. The UV divergences are controlled after carefully defining the time-ordered products.

Let us now recall this construction in the framework of pAQFT, see e.g. \cite{Book}. To avoid UV problems for the moment, we consider for now only regular functionals. Let  $F,G\in\cF_{\reg}$, then the time-ordered product of $F$ and $G$ is defined as 
\[
F\T G\doteq \mu\circ e^{i\hbar \DD}(F\otimes G)=\sum\limits_{n=0}^\infty \frac{\hbar^n}{n!}\left<F^{(n)},\left(i\Delta^{\mathrm{D}}\right)^{\otimes n}G^{(n)}\right>\,,\nonumber
\]
where $\Delta^{\mathrm{D}}$ is the Dirac propagator defined as
\beq\label{eq:Dirac}
\Delta^{\mathrm{D}}=\frac{1}{2}(\Delta^{\mathrm{R}}+\Delta^{\mathrm{A}})\,.
\eeq
More generally, we define  $n$-fold time-ordered products via a map $\cT_n : \cF_{\reg}^{\otimes n}[[\hbar]] \ria \cF_{\reg}[[\hbar]]$, given by
\[
\cT_n(F_1\otimes\dots \otimes F_n)\doteq F_1\T\dots\T F_n\,,\quad F_1,\dots,F_n\in\cF_{\reg}[[\hbar]]\,.
\]
Now consider an interaction $\no{V}$ with $V\in\cF_{\reg}$, where normal ordering given by the power series given in formula \eqref{eq:Wick} as $\alpha_H^{-1}V$.

Then the formal S-matrix is
\be\label{eq:S_exp}
\cS(\lambda \no{V})\doteq e_{\cT}^{i\lambda V/\hbar}=\sum_{n=0}^\infty \frac{1}{n!}(\tfrac{i\lambda}{\hbar})^n \underbrace{\no{V}\T\dots\T \no{V}}_n\,.
\ee

These formulae are well-defined for regular functionals, but usually, physically relevant interaction terms are local and non-linear, hence not regular. Therefore, one needs to extend $\cS$ from a map on $\cF_{\reg}[[\hbar]]$ to a map on $\mathfrak{A}_{\loc}[[\hbar]]$.
To this end, one first sets
\beq\label{eq:TnH}
\cT^H_n\doteq\alpha_H\circ\cT_n\circ (\alpha_H^{-1})^{\otimes n}
\eeq
and writes the S-matrix as
\be\label{eq:Smatrix}
\cS(\lambda \no{V})=\alpha^{-1}_H \sum_{n=0}^\infty \tfrac{1}{n!}(\tfrac{i\lambda}{\hbar})^n\ \cT^H_n(V^{\otimes n}) 
\doteq \alpha^{-1}_H \sum_{n=0}^\infty S_n(V)
\,.
\ee

It follows that extending $\cS$ to a map on $\mathfrak{A}_{\loc}[[\hbar]]$ (with values in $\mathfrak{A}_{\mca}[[\lambda]]((\hbar))$) is reduced to extending for any $n \in \N$, the time ordering $\cT^H_n$ to a map on $\cF_{\loc}[[\hbar]]$ (with values in $\cF_{\mca}[[\hbar]]$). This extension problem is called \textit{renormalization problem} and it is usually solved recursively using the Epstein Glaser procedure \cite{EG,BF0,BDF}. The combinatorial formula for $\cT_n^H$ is 
\beq\label{eq:TnH_comb}
\cT^H_n=e^{\hbar\sum_{i<j}\cD_{\mathrm{F}}^{ij}}\,,
\eeq
where the sum runs over all $1 \leq i < j \leq n$ and where $\cD_{\mathrm{F}}^{ij}\doteq \langle\Delta^{\mathrm{F}},\frac{\delta^2}{\delta\ph_i\delta\ph_j}\rangle$ and $\Delta^{\mathrm{F}}$ is the Feynman propagator defined by
\[
\Delta^{\mathrm{F}}:=\frac{i}{2}(\Delta^{\mathrm{R}}+\Delta^{\mathrm{A}})+H\,.
\]
Oberve that we use the term ``Feynman propagator'' to describe the bi-distribution obtained from the canonical, uniquely defined Dirac propagator by adding $H$. In this sense, the choice of the Feynman propagator is fixed by the choice of $H$, which in turn corresponds to the choice of the 2-point function $W$. This terminology, introduced in \cite{BDF}, is convenient in curved spacetimes, where, generically, there is no natural choice of $W$, hence no natural choice of the Feynman propagator. \medskip

In this paper, we treat the non-perturbative case, so we need to generalize the setting above to the situation where $\hbar$ is not a formal parameter, but a number. We proceed in a similar way. We equip the space of smooth functionals $\Ci(\cE,\C)$ with the topology $\tau$ of pointwise convergence of all the derivatives. The $n$-th functional derivative of $F\in\Ci(\cE,\C)$ at $\ph\in\cD$ is treated as an element of $\cE(\M_D^n)^{\C}$, equipped with the standard weak topology. We define $\fA$ similarly to $\fA_{\mca}[[\hbar]]$, by replacing $\cF_{\mca}[[\hbar]]$  with $(\Ci(\cE,\C),\tau)$.

\section{The Sine-Gordon model and vertex operators}
Starting point in the construction of the Sine-Gordon model is the free theory given by (minus) the massless Klein Gordon (i.e wave) operator $P=-\Box$ on two-dimensional Minkowski spacetime $\M_2=(\R^2,\eta)$, where $\eta=\diag(1,-1)$. 
Let $\Delta^{\rm R}$ and $\Delta^{\rm A}$ denote the retarded and advanced (forward and backward) fundamental solutions\footnote{To see that these are indeed fundamental solutions, observe that  $-\tsf 1 2 \theta(t-|\bs x|) = -\tsf 1 2 \theta(t-\bs{x})\theta(t+\bs{x})$ to calculate $\square (-\tsf 1 2 \theta(t-|\bs{x}|) )  = - \delta$ in the sense of distributions. Use $\partial_1 \theta(t-|\bs{x}|)=- \eps(\bs{x})\delta(t-|\bs{x}|)$ and $\partial_{1} \eps(\bs{x})= 2 \delta(\bs{x})$.} of $P$.

\begin{rem}[\bf Notation] 
	In flat spacetime we use the translation symmetry to express a translation invariant bi-distribution in terms of a distribution in one variable (the difference variable). By common  abuse of notation, we use the same symbols for the latter as for the corresponding bi-distributions, i.e.  $u(x,y)= u(x-y)$.
\end{rem}
Retarded and advanced fundamental solutions are given in terms of the following distributions in one variable:
\[
\Delta^{\rm R}(x)=-\tsf 1 2 \theta(t-|\bs x|) \qquad \Delta^{\rm A}(x)=-\tsf 1 2 \theta(-t-|\bs x|) \ , \text{ where } x=(t,\bs{x})  \in \M_2  \ .
\]
The 2-point function of the free massless scalar field $\varphi$ in 2 dimensions \cite{pier} coincides with the Hadamard parametrix \cite{Schubert}
\[
W(x)= -\frac 1 {4\pi} \ln\left(\frac{-x\cdot x + i \varepsilon t}{\Lambda^2}\right)
\]
where $\Lambda>0$ is the scale parameter. To work with dimensionless quantities, we choose the units so that we can set $\Lambda=1$. It is well known that $W$ is not a 2-point function of a Hadamard state, since, as a bi-distribution, it fails to be positive definite
 \cite{Wightman67}. Therefore, if one wants to study  representations of the abstract algebra from section~\ref{sec:genpAQFT}, one can apply the Krein space construction~\cite{MPS} or use the GNS representation of a Hadamard state. Note that the  2-point function of such a state differs from $W$ only by a smooth symmetric function on $\M_2$. Hence, the algebraic structure of the theory (e.g. the vertex operators algebra) can be studied independently of the choice of a Hadamard state.

We first write the 2-point function $W$ in terms of a symmetric ($H$) and antisymmetric ($\Delta$) contribution, $W=\frac i 2 \Delta + H$. The antisymmetric contribution is the causal propagator
\beq\label{eq:Causal}
\tsf i 2 \Delta = \tsf i 2 (\Delta^{\rm R} - \Delta^{\rm A}) 
\eeq
and a short calculation (see the appendix~\ref{app:calc}) then shows
\beq\label{eq:Hexplicit}
H=-\tsf {1}{4\pi} \left(\ln |t+\bs{x}|+\ln |t-\bs{x} |\right)=-\tsf {1}{4\pi} \left(\ln |x^2|\right)
\eeq
The wavefront sets of the distributions $\Delta$, $\Delta^{\rm A/\rm R}$
are such that H\"ormander's sufficient criterion for the existence of products of distributions does not apply, essentially because the wavefront set of the Heaviside function is that of the $\delta$-distribution. However, we will use that the characteristic function $\chi_{[0,\infty)}$ squares to itself and hence we set $\theta^2=\theta$ (as distributions). Similarly, we set the product of $\theta$ and $\theta\circ j$ where $j(x)=-x$, to 0, which is justified since the product of the locally integrable functions $\chi_{[0,\infty)}$ and $\chi_{(-\infty, 0]}$ is 0 in $L^1_{loc}$. 

Explicitly, we find for $n \in \N$,
\beqan
\Delta^n(x)&=&(- \tsf 1 2)^n(\theta(t-|\bs{x}|)-\theta(-t-|\bs{x}|))^n\nonumber\\
&=&(- \tsf 1 2)^n\theta(t-|\bs{x}|)+
(\tsf 1 2)^n\theta(-t-|\bs{x}|)\label{eq:ProdDelt}
\\
W^n(x)&=&\sum_{k=0}^n\left(n\atop k\right)\ \left(-\tsf i 4 \right)^{n-k}\left(\theta(t-| \bs{x}|)+
(-1)^{n-k}\theta(-t-|\bs{x}|)\right) \ H^k\label{eq:ProdW}
\,.
\eeqan
Note that the powers of the  2-point function $W$ are well-defined as usual, but contrary to the behaviour of massless fields in higher dimensions, also powers of the commutator function $\Delta$ are.

We will now study the exponential series of these distributions  in the following sense: 
Let $u \in \mc D^\prime(\R^k)$ be such that arbitrary powers $u^n$, $n \in \N$, are again distributions. Then one can investigate if the series 
\[
e^u(g):= \sum_n^\infty \tsf 1 {n!} u^n(g)
\]
converges (in $\R$) for any $g \in \mc D(\R^k)$. This is the case, for instance, if $u$ is smooth and polynomially bounded. In terms of formal integral kernels, we write
\[
e^u(g)= \sum_n^\infty \tsf 1 {n!} \int u(x)^n g(x) dx =: \int e^ {u(x)} g(x) dx 
\]
Observe that the last equality is simply short hand notation (not an interchanging of integration and taking sums). However, if $u$ is smooth and $e^{u(x)}$ converges
pointwise, then since $g$ is compactly supported, $e^{u(x)} g(x)$ converges uniformly (on $\supp g$) and in this case, the integral and the summation can indeed be interchanged.
In the same spirit we regard the identity
\[
(e^u  e^v )(g) =  \sum_n^\infty \tsf 1 {n!}  \sum_m^\infty \tsf 1 {m!} \int u(x)^n v(x)^m g(x) dx = e^{u+v}(g)
\]

This notation is now used to calculate  the commutation relation of two vertex operators $\no{V_a(f)}$. We first apply normal ordering to vertex operators,
\[
\no{V_{a}(x)}\equiv\no{V_{a}(t,\boldsymbol x)}\doteq \no{\exp(ia \Phi_{(t,\boldsymbol x)})}\,,
\]
where $a>0$,  $\Phi_{(t,\boldsymbol x)}$ is the evaluation functional, i.e. $\Phi_{(t,\boldsymbol x)}(\varphi)=\varphi(t,\boldsymbol x)$ for $\ph \in \cE$,  and where the Wick product is defined by~\eqref{eq:Wick}. These are the vertex operators. The smeared vertex operators are then defined by evaluating in a test function $g$
\beq\label{eqn:defVertex}
\no{V_a(g)}\doteq \no{\int \exp(ia \Phi_{x})g(x)dx} \in {\mathfrak A}_{\mu c}[[\hbar]]
\eeq
Here, we face the additional complication that the distributions 
depend on $\varphi\in \cE$. We will treat this dependence  pointwise as discussed at the end of section~\ref{sec:genpAQFT}.

In terms of formal integral kernels and using the formula for products of normally ordered functionals (\ref{eq:AlphStarH}) and evaluation in states~\eqref{eq:vev}, we find 
\beqan\label{eq:calcPropVert}
&&\alpha_H\left(\no{V_a(f)} \star \no{ V_{a'}(g)}\right) = \ \exp(ia\varphi)(f) \ \star_H   \exp(ia'\varphi)(g)\ = 
\nonumber
\\
&=&\sum_n \tsf{\hbar^n} {n!} \int \!\!...\!\int \prod_{j=1}^n W(x_j,y_j) \cdot
\frac {\delta^n \left(\sum_k\tsf {i^ka^k} {k!} \varphi^k \right)  }{\delta \varphi(x_1)... \delta \varphi(x_n)} (f)\cdot
\frac {\delta^n \left(\sum_k\tsf {i^ka'^k} {k!} \varphi^k \right)}{\delta \varphi(y_1)... \delta \varphi(y_n)}(g)\ d^{2n}xd^{2n}y
\nonumber
\\
&=&\sum_n \tsf{\hbar^n} {n!}  \int \!\!...\!\int \prod_{j=1}^n W(x_j,y_j) \cdot
\Big(\sum_{k\geq n}\tsf {i^ka^k} {k!} \tsf {k!}{(k-n)!}
\delta(x_1-x)\cdots \delta(x_n-x) \varphi^{k-n} f(x) \Big)  
\nonumber
\\
&&\qquad \qquad \qquad
\cdot \Big(\sum_{k\geq n}\tsf {i^ka'^k} {k!} \tsf {k!}{(k-n)!} \delta(y_1-y)\cdots \delta(y_n-y) \varphi^{k-n} g(y) \Big)    \ d^{2n}xd^{2n}y
\nonumber
\\
&=&\sum_n \tsf{\hbar^n} {n!} \int W(x,y)^n a^n a'^n i^{2n} \exp(ia\varphi(x))  f(x) \exp(ia'\varphi(y)) g(y)  \ .
\eeqan
In the spirit of the comments above we interpret the sum as the distribution $e^{-aa^\prime\hbar W(x,y)}$, so
\begin{align*}
\alpha_H(\no{V_a(f)} \star \no{ V_{a'}(g)}) &=
\int e^{-aa^\prime\hbar W(x,y)} f(x) g(y)\, V_a(x)  V_{a'}(y)  \ dxdy 
\nonumber
\\
&= \int  e^{-aa^\prime\hbar \frac i 2 \Delta (x,y) -aa^\prime\hbar H(x,y)} f(x) g(y)\,V_a(x)  V_{a'}(y) \ dxdy 
\nonumber
\end{align*}
Reversing the order of the vertex operators (i.e. interchanging $a$ and $a^\prime$ and $f$ and $g$, or the roles of $x$ and $y$) we find
\beqa
\alpha_H(\no{V_{a'}(g)} \star \no{ V_{a}(f)}) &=&
\int e^{-aa^\prime\hbar W(y,x)} f(x) g(y)\ 
V_{a'}(y) V_a(x) 
\ dxdy 
\\
&=& \int  e^{+aa^\prime\hbar (\frac i 2 \Delta (x,y) -aa^\prime\hbar H(x,y)} f(x) g(y)\    
V_{a'}(y) V_a(x)  \ dxdy 
\eeqa
and deduce that  in terms of formal integral kernels, the commutation relations are 
\[
\no{V_{a}(x)}\star\no{V_{a'}(y)}=  e^{-aa^\prime\hbar  i \Delta (x,y)}
\no{V_{a'}(y)}\star\no{V_{a}(x)}\, .
\]
Denote $x=(\bs{x},t)$, $y=(\bs{y},t')$. Using
\begin{multline*}
	e^{-aa^\prime\hbar  i \Delta (x,y)}=1-\theta((t-t')-|\bs{x}-\bs{y}|)-\theta(-(t-t')-|\bs{x}-\bs{y}|)
	\\+e^{aa'i\hbar/2}\ \theta((t-t')-|\bs{x}-\bs{y}|)
	+e^{-aa'i\hbar/2}\ \theta(-(t-t')-|\bs{x}-\bs{y}|)\,,
\end{multline*}
we see {\em a posteriori} that the exponential series is well defined in this case.
We get, in particular, for $\bs{x}=\bs{y}$ and $t>t'$,
\[
V_{a}(t,\bs{x})V_{a'}(t',\bs{x})=e^{aa'i\hbar/2}V_{a}(t',\bs{x})V_{a'}(t,\bs{x})\,,
\]
which is the well-known braiding property for vertex operators derived from e.g.~\cite[eqn 14.8.10]{kac}.

To further clarify the relation of our approach with the known literature on the Euclidean version of the Sine-Gordon model, we now calculate the vacuum expectation value of the product of 2 vertex operators (cp. \cite[Lemma 2.2]{Froe76}).  Let $W_v=W+v$ be a 2-point function of some Hadamard state for the appropriately chosen symmetric smooth function $v$. Let $H_v=H+v$. The new star product is equivalent to $\star_H$ and the equivalence is provided by the map $\alpha_v$. 

Let us now evaluate the product of  two vertex operators in the Gaussian state  with covariance $H_v$ as explained in \eqref{eq:vev} and calculate the vacuum expectation value (cp. \cite[Lemma 2.2]{Froe76}). For the latter, we find
\beqa
\omega_{0,H}(\no{V_{a}(f)}\star \no{V_{a'}(g)})&=&\left<\left(\int e^{ia \varphi(x)}f(x)dx\right)\star_H \left(\int e^{ia \varphi(y)}g(y)dy\right)\right>_0
\\&=& \int  e^{-aa^\prime\hbar W(x,y)} f(x) g(y)\   \ dxdy 
\\&=& \int f(x) g(y) ((t-t')^2-(\bs{x}- \bs{y})^2 + i (t-t')\eps)^{\hbar aa^\prime/4 \pi}\ dxdy
\eeqa
while the expectation value in the Hadamard state with covariance $H_v=H+v$ is
\[
\omega_{0,H_v}(\no{V_{a}(f)}\star \no{V_{a'}(g)})=\int e^{-aa'v(x,y)} \tilde{f}(x) \tilde{g}(y) ((t-t')^2-(\bs{x}- \bs{y})^2 + i (t-t')\eps)^{\hbar aa^\prime/4 \pi}\ dxdy\,,
\] 
where $\tilde{f}=e^{-a^2v}f$, $\tilde{g}=e^{-{a'}^2v}g$, so, in particular, the change of normal ordering of the vertex operators can be absorbed into the re-definition of test functions.  

\section{The pAQFT  $S$-matrix of the Sine-Gordon model}\label{sec:S_matrix}

Let us now apply the pAQFT framework to calculate the S-matrix $\cS$ for a potential of the form $V=\tfrac 1 2 (V_a + V_{-a})$. We will calculate the expectation value of $\cS(\lambda \no{V})$ in the coherent state $\omega_{\ph,{H_v}}$ where $H_v$ is the symmetric part of the 2-point function of some Hadamard state $W_v$. For now we assume that there is no need to renormalize -- which will a posteriori be justified by the calculations. First note that we can simplify the computation in the following ways:

{\rem\label{rem:expCalc} In our approach, $\alpha_{H_v}\circ\cS\circ\alpha^{-1}_{H_v}$ is a map on functionals and therefore, computing the expectation value in $\omega_{\ph,{H_v}}$ reduces to computing the evaluation of the functional $\alpha_{H_v}\circ\cS\circ\alpha^{-1}_{H_v}(\lambda V)$ at the configuration $\ph$. Also note that we can first construct $\alpha_{H}\circ\cS\circ\alpha^{-1}_{H}$ for $H$ given by \eqref{eq:Hexplicit} and then pass to  $\alpha_{H_v}\circ\cS\circ\alpha^{-1}_{H_v}$ using the intertwining map $\alpha_v$.}\bigskip

According to formula \eqref{eq:Smatrix}, the S-matrix is 
\beqan\label{eq:TnHSGV}
S(\lambda \no{V})&=&\alpha^{-1}_H\sum_n \tsf 1 {n!} 
(\tfrac{i\lambda}{\hbar})^n (\tsf 1 2)^n\ \cT^H_n((V_{a} + V_{-a})^{\otimes n}) 
\nonumber
\\ &=&
\alpha^{-1}_H\sum_n \underbrace{\tsf 1 {n!}  (\tfrac{i\lambda}{\hbar})^n (\tsf 1 2)^n \sum_{k=0}^n \left(n \atop k\right) \cT_n^H\Big(V_{a}^{\otimes k}\otimes V_{-a}^{\otimes  (n-k)}\Big)}_{\doteq  S_n(V)}
\,,
\eeqan
where the second identity follows from the fact that the time-ordered product is commutative.

We will now explicitly compute the S-matrix for the Sine-Gordon model and show that the resulting series converges in $\fA$ (as defined at the end of section \ref{sec:genpAQFT}). To calculate $\cT_n^H$, we first determine  the Feynman propagator of the Sine-Gordon model
\begin{equation}
\label{eq:DFexplicit}
\Delta^{\rm F}(t,\bs x)=\tsf i 2 (\Delta^{\rm R}+\Delta^{\rm A}) + H\,,
\end{equation}
which can also be written, using the well-known $\epsilon$-prescription as
\[
\Delta^{\rm F}(t,\bs x) = - \tsf 1 {4 \pi} \ln(x^2 - i\epsilon)\,,
\] 
cf. again the appendix~\ref{app:calc}. 

Using the explicit form 
of the Feynman propagator (\ref{eq:DFexplicit}), we find
\begin{multline}
\cT_n^H \left(V_{a_1} \otimes \dots \otimes V_{a_n}\right)(g^{\otimes n})= \int e^{i(a_1\varphi(x_1) + \dots + a_n\varphi(x_n))}\\
\times\prod_{1\leq i<j\leq n} 
\theta_{a_i a_j}(\tau_{ij}, \bs \zeta_{ij} ) \  \ 
|\tau_{ij}^2-\bs \zeta_{ij}^2|^{\hbar a_ia_j/4\pi}
g(x_1)\dots g(x_n) dx\label{eq:TnHexplicit}
\end{multline}
with $\tau_{ij}=t_i-t_j$ and $\bs \zeta_{ij}=\bs x_i - \bs x_j$ and with the abbreviation 
\[
\theta_{a_i a_j}(\tau, \bs \zeta ) = 
\theta(\tau+|\bs \zeta|) - \theta(\tau-|\bs \zeta|) + \left(\theta(\tau-|\bs \zeta|) + \theta(-\tau-|\bs \zeta|) \right)  \ e^{ia_i a_j\hbar /4}\,.
\]

To see this, observe that (with $\tau=t-t^\prime$, $\bs \zeta=\bs x - \bs x^\prime$) 
we have
\begin{multline*}
e^{- a a^\prime \hbar \Delta^{\rm F}(x,y)}
=
e^{ a a^\prime \hbar\left( \tsf i 4 (\theta(\tau-|\bs \zeta|) + \theta(-\tau-|\bs \zeta|) )\right)}
\ e^{ a a^\prime \hbar\left( \tsf 1 {4\pi} \ln |\tau^2-\bs \zeta^2|\right) } 
\\
=
\Big(
\underbrace{1-\theta(-\tau-|\bs \zeta|)}_{=\theta(\tau+|\bs \zeta|)} - \theta(\tau-|\bs \zeta|) + 
\left(\theta(\tau-|\bs \zeta|) + \theta(-\tau-|\bs \zeta|)\right) \ e^{\frac{a a^\prime  \hbar i} 4}
\Big) \ 
|\tau^2-\bs \zeta^2|^{\tsf {\hbar a a^\prime } {4\pi}}\,,
\end{multline*}
where the second equality follows from the idempotency of the Heaviside function, $e^{a\theta}=1+a\theta+\tfrac 1 2 a^2 \theta + \dots = 1 -\theta + \theta e^a $ for any real $a$.

Observe that for $a_i=-a_j$, the product $a_ia_j=-a^2$, and  the term
$|\tau_{ij}^2-\bs \zeta_{ij}^2|^{\hbar a_ia_j/4\pi}$ in \eqref{eq:TnHexplicit} is singular in $t_i-t_j=\pm (\bs x_i - \bs x_j)$. Using translation invariance we reduce this to a problem of a singularity along $\tau=\pm \bs \zeta$ (for a distribution defined on $\R^2$). This singularity is within the support of the Heaviside functions in \eqref{eq:TnHexplicit}. However, this is a homogeneous distribution which, as long as $a^2 \hbar/4\pi \not\in \N$,  has a unique extension according to~\cite[Thm~3.2.3]{Hoer1} and the prescription from~\cite{BF0}, see also~\cite{NST}. 
We will see below directly that these contributions are well-defined for $a^2 \hbar/4\pi <1$.

We now prove our main estimate, which gives a bound on the distributional product  occuring at $n$-th order perturbation theory as given in \eqref{eq:TnHexplicit}. This bound will enable us to prove that the S-matrix converges.

{\prop Let $\beta \doteq \hbar a^2/4\pi < 1$. Let $p > 1$ such that $\beta p< 1$. Let  $g \in C_c^\infty(\R^{2})$ be a function  cutting off the interaction $V$, and let $f=g^{\otimes n}$. Consider the  expectation value of the $n$-th order contribution to the  $S$-matrix   of Sine-Gordon theory in the state $\omega_{\varphi,H}$ with $H$ from \eqref{eq:Hexplicit}.  Choosing the support of $g$ small enough,  there is a constant $C=C(p,g)$ such that for all $n$,
	\[
	\Big| S_n(V)(f)(\varphi)\Big| 
	\leq \frac{\lfloor \frac n 2 \rfloor C^{{n}}\lambda^n}{(\lfloor \frac n 2 \rfloor{!})^{1-1/p}}\,,
	\]
	where $S_n(V)$ is given by \eqref{eq:TnHSGV}.
\label{prop:main}}

\begin{proof}
	
	According to remark \ref{rem:expCalc}, we indeed only need to evaluate the functional  $S_n(V)(g)$  at the configuration $\ph$.
	
	Now, we consider a contribution to \eqref{eq:TnHSGV} of the form $\cT_{n}^H\Big(V_{a}^{\otimes k}\otimes V_{-a}^{\otimes n-k}\Big)$ for $1 \leq k\leq n$ fixed.
	
	In the explicit form of the timeordering \eqref{eq:TnHexplicit}, we estimate the functions  $\theta_{a_i a_j}(\tau, \bs \zeta )$, which are given in terms of Heaviside functions and an oscillating factor, by 1. Without these functions, and neglecting the exponentials containing $\ph$,  the formal integral kernel 
	of $\cT_{n}^H\Big(V_{a}^{\otimes k}\otimes V_{-a}^{\otimes n-k}\Big)(\ph)$
	is
	\[
	\prod_{1\leq i<j\leq k} |\tau_{ij}^2-\bs \zeta_{ij}^2|^\beta
	\prod_{1\leq i\leq k, k< j\leq n} |\tau_{ij}^2-\bs \zeta_{ij}^2|^{-\beta}
	\prod_{k<i<j\leq n} |\tau_{ij}^2-\bs \zeta_{ij}^2|^\beta\,.
	\]
	with $\beta=\hbar a^2/4\pi>0$, and with the time variable differences $\tau_{ij}=t_i-t_j$ and the space variable differences $\zeta_{ij}=\bs{x}_i-\bs{x}_j$. 
	We rewrite this formula using 
	\[
	|\tau^2-\bs \zeta^2|=|\tau - \bs \zeta|\ |\tau + \bs \zeta|
	\]
	and consider the two factors separately, 
	\beq\label{eq:wpm}
	w^\pm_{n,k}(\underline{\tau},\underline{\bs \zeta})\doteq
	\prod_{1\leq i<j\leq k}|\tau_{ij}\pm\bs \zeta_{ij}|^{\beta}
	\prod_{1\leq i\leq k, k< j\leq n}  |\tau_{ij}\pm\bs \zeta_{ij}|^{-\beta}\prod_{k<i<j\leq n}  |\tau_{ij}\pm\bs \zeta_{ij}|^{\beta}
	\eeq
	with the underlined variables denoting the collection of all the respective difference variables.
	
	We now introduce new variables, $z, z^\prime \in\R^k$, $w, w^\prime \in \R^{n-k}$,
	\beqa
	z_i &=& t_i - {\bs x}_i \quad 1\leq i \leq k\\
	w_{i-k} &=& t_i - {\bs x}_i \quad k+1\leq i \leq n
	\\
	\\	z^\prime_i &=& t_i + {\bs x}_i \quad 1\leq i \leq k\\
	w^\prime_{i-k} &=& t_i + {\bs x}_i \quad k+1\leq i \leq n
	\eeqa
	Then we have 
	\[
	\tau_{ij}-\zeta_{ij}=\left\{\begin{array}{ll}
	z_i-z_j & 1\leq i<j \leq k\\
	z_i-w_{j-k} & i \leq k , k +1 \leq j\leq n\\
	w_{i-k}-w_{j-k} &  k +1 \leq i < j\leq n
	\end{array}\right.
	\]
	and likewise for the $+$-combination,
	\[
	\tau_{ij}+\zeta_{ij}=\left\{\begin{array}{ll}
	z^\prime_i-z^\prime_j & 1\leq i <j \leq k\\
	z^\prime_i-w^\prime_{j-k} & i \leq k , k +1 \leq j\leq n\\
	w^\prime_{i-k}-w^\prime_{j-k} &  k +1 \leq i < j\leq n
	\end{array}\right. \,.
	\]
	Therefore, we find
	\begin{equation}\label{w2kk}
	w_{n,k}^-(z,w)=\prod_{1\leq i<j\leq k} |z_i-z_j|^{\beta} \prod_{1\leq i < j \leq n-k}|w_i-w_j|^{\beta}
	\prod_{1\leq i\leq k, 1\leq j \leq n-k} |z_i-w_j|^{-\beta}
	\end{equation}
	and likewise for the $+$-combination, with the primed variables, $w_{n,k}^+(z^\prime,w^\prime)$.
	Note that for the negative powers we indeed get an unordered product (no relation $i<j$). This is a consequence of the fact that in the second factor in \eqref{eq:wpm}, the two indices are independent.
	Note also that for $k< n/2$ there are more $w$-variables than $z$-variables while for $k>n/2$ there are more $z$-variables. 
	
	In this notation, we find 
	\beqan
	|S_n(V)(f)(\ph)|&\leq & 
	\frac 1 {n!} \;
	C_0^n   \sum_{k=0}^n\left(n \atop k\right) \int  w^-_{n,k}(z,w) \,w_{n,k}^+(z^\prime, w^\prime)| f(z,w,z^\prime, w^\prime)| dz \dots dw^\prime 
\quad 	\label{eq:SnEstim}
	\eeqan
	where by abuse of notation, we use the same symbol  $f$ to denote $f$ in the new coordinates and where 
	$C_0= \frac{\lambda}{2\hbar}$.
	
	In this sum, we now consider a contribution with  $k\leq n/2$, i.e. one with less (or equally many) powers of  $V_{a}$ than powers of $V_{-a}$. Observe that this is  without loss of generality, since the contributions with $k>n/2$ can be treated in exactly the same way with the roles of $w$ and $z$ interchanged.

	We will see that in order to estimate the contributions with $k\leq n/2$, we need a slight generalization of the argument leading to ``The main estimate c)'' before Thm 3.4 in \cite{Froe76} -- an estimate which is still useful despite the fact that we started from a theory with hyperbolic signature.

	The starting point of our estimate is a generalization of the Cauchy determinant lemma, which we use to rewrite $w^-_{n,k}(z,w)$ as $|\Det\mathbf{D}|^\beta$, 
	where $\mathbf{D}$ is the $l\times l$-matrix ($l=n-k)$ with entries\footnote{We thank K.~Fredenhagen for pointing out this formula  to us.} 
	\[
	D_{ij}=\left\{\begin{array}{ccc}
	w_j^{i-1}&,&1\leq i\le l-k\ ,\\
	1/(z_{i-l+k}-w_j)\phantom{\int^z}&,&l-k<i\leq l\ .\end{array}\right.
	\]
	Note here that the number $l-k=n-2k$ counts how many more $w$ variables there are than $z$-variables. 
	
	We can proceed in exactly the same way for the $+$-combination $w_{n,k}^+(z^\prime, w^\prime)$.
	Since the plus and the minus case are exactly the same, we only discuss $w^-_{n,k}$ and  the integration over the unprimed variables. In notation, we suppress the unprimed variables, writing $F(z,w)$ for $f(z,w,z^\prime,w^\prime)$. Hence we consider 
	\[
	\int  w^-_{n,k}(z,w) |F(z,w)| dzdw = \int \left|\Det\mathbf{D}\right|^\beta \  |F(z,w)| dzdw
	\]
	We will now calculate an estimate on this integral's dependence on $n$,
	relying heavily on the  assumption that  $\beta\doteq\hbar a^2/4\pi < 1$. 
	
	Directly following the argument from \cite{Froe76}, we first 
	choose $p>1$ 
	with  $\beta p < 1$ and set $1<q\doteq \frac p {p-1} < \infty$. Viewing the compactly supported smooth function $F$ as an element of $L^q(\R^n)$, the H\"older inequality $\|FH\|_1= \int |FH|dx \leq \|F\|_q\|H\|_p$ for $\frac 1 q+ \frac 1 p=1$, yields
	\beqa
	\int \left|\Det\mathbf{D}\right|^\beta \  |F(z,w)| dzdw & \leq &  \ \|F\|_{L^q(\R^{n})} \  \left( \int_{K_g} \left|\Det\mathbf{D}\right|^{\beta p}dzdw \right)^{\frac 1 p} \,,
	\eeqa
	where $K_g$ is the compact set given by considering the support of $f (z,w,z^\prime, w^\prime)$ only with respect to the $z,w$ variables,
	\[
	K_g\doteq \Pi_{z,w}\supp f\doteq\{(z,w)\in\R^n| \ \text{there is } (z^\prime, w^\prime)\in \R^n \text{ s.t. } f(z,w,z^\prime, w^\prime)\neq 0 \}^{cl}  \,,
	\]	
	where $cl$ denotes closure and $\Pi$ stands for projection rather than product.
		 Observe that this is a compact set in $\R^n$ that really only depends on the test function $g$'s support in $\R^2$, since
	\[
	f(z,w,z^\prime,w^\prime)= \prod_{i=1}^k g\left(\tsf 1 2 (z_i+z_i^\prime), \tsf 1 2 (z_i-z_i^\prime)\right)\
	\prod_{j=1}^{n-k} g\left(\tsf 1 2 (w_j+w_j^\prime), \tsf 1 2 (w_j-w_j^\prime)\right) \, .
	\]
	Similarly, one finds that there is a constant, depending on the support of $g$ (and on $q$), such that 
	\beq\label{eq:est1}
	 \|F\|_{L^q(\R^{n})} \leq  (C_{g,q})^n\ .
	\eeq
	
	To compute the determinant, we use the Laplace expansion. In order to separate the contributions from the Vandermonde part of the matrix from those of the classic Cauchy matrix, we consider all matrices that are obtained by taking the first $l-k$ rows of $\mathbf D$ and choosing $l-k$ (of the $l$) columns of $\mathbf D$. Such matrices are called $\mathbf{D}_{\bf c}$ with $\bf c$ being the collection of labels denoting the choice of columns, ${\bf c}=(c_1,\dots,c_{l-k})$ with $1\leq c_1<\dots<c_{l-k}\leq l$,
	Let   $\mathbf{D}'_{\bf c}$ denote the $k\times k$-matrix that is complementary to $\mathbf{D}_{\bf c}$. Then up to a sign, the determinant of $\mathbf D$ is 
	\begin{equation}\label{Laplace}
	\sum_{\bf c}(-1)^{|\bf c|}\Det \mathbf{D}_{\bf c}\Det\mathbf{D}'_{\bf c} \,.
	\end{equation}
	where $|{\bf c}|=c_1+\dots+c_{l-k}$, and we find
	\[
	\int_{K_g} \left|\Det\mathbf{D}\right|^{\beta p}dzdw \leq 
	\sum_{\bf c} \int_{K_g} \left|\Det \mathbf{D}_{\bf c}\Det\mathbf{D}'_{\bf c} \right|^{\beta p}dzdw
	\]
	Here, we have used the triangle inequality and the fact that $\left(\sum |\dots|\right)^{\beta p}\leq \sum |\dots|^{\beta p}$ for $\beta p<1$.

	Note that each $\Det \mathbf{D}_{\bf c}$ is a Vandermonde determinant in the variables $w_i$, where $i\in{\bf c}$; and $\Det \mathbf{D}'_{\bf c}$ is a Cauchy determinant of a matrix with entries $1/(z_i-\tilde w_j)$, where $1\leq i,j \leq k$ and $(\tilde w_1, \dots, \tilde w_k)\doteq (w_i)_{i\in{\bf c^\prime}}$, where ${\bf c^\prime}\doteq \{1,\dots,l\}\setminus \bf c$ is the set of complementary column labels. 
	Therefore, we can factor each integral in the sum into a product of integrals over the Cauchy- and the Vandermonde determinant, respectively.	
	
	We first give the estimate on the Vandermonde determinants. Let $m\doteq  |{\bf c}|=l-k =n -2 k$, the number that counts how many more $w$-variables there are compared to $z$-variables. Then the Vandermonde Determinant is a homogeneous polynomial $P_V$ of degree $\tsf{1}{2}m(m-1)$ and we conclude that there are constants such that 
	\[
	\int_{K_{g, {\bf c}}} \left|P_V(u_1,\dots, u_m)\right|^{\beta p}du \leq 
	\vol(K_{g, {\bf c}}) \sup_{ K_{g, {\bf c}}} \left|P_V(u_1,\dots, u_m)\right|^{\beta p} \leq (C_{g,{\bf c}})^m \, 
	(\tilde C_{g,{\bf c}, p})^{\tsf{1}{2}m(m-1)}\,,
	\]
	where $K_{g, {\bf c}}$ denotes the projection of $K_g$ onto the variables $w_i$,  $i\in{\bf c}$, in the sense explained above,
	\[
K_{g, {\bf c}}\doteq \Pi_{{\bf c}}K_g\doteq \{ (w_i)_{i\in {\bf c}} \in \R^m | \   \text{there is } (z, (w_j)_{j\in {\bf c}^\prime})\in \R^{2k} \text{ s.t. } \chi_{K_g}(z,w)\neq 0 \}^{cl}\ .
\]	
	Observe that the constants depend on the choice of ${\bf c}$. By choosing the test function $g$ appropriately, we can make the constants arbitrarily small, so we get an estimate 
	\beq\label{eq:est2}
	\int_{K_{g, {\bf c}}} \left|P_V(u_1,\dots, u_m)\right|^{\beta p}du \leq 
	(C^{VdM}_{g,{\bf c}, p})^m
\eeq
where $VdM$ is short for Vandermonde.
	For the Cauchy determinants, we directly follow \cite[(3.15)]{Froe76}, and find an estimate
	\beq\label{eq:est3}
	\sum_{\pi \in S_k} \int_{K_{g,z}\times K_{g,{\bf c}^\prime}} \prod_{j=1}^k \frac 1 {|z_j-\tilde w_{\pi(j)}|^{\beta p}}dz d \tilde w
	\leq \sum_{\pi \in S_k} (C_{g,{\bf c},p})^k = k!\  (C_{g,{\bf c},p})^k
	\eeq
	where the integration is taken over all the variables $z$ and the variables $(\tilde w_1, \dots, \tilde w_k)\doteq (w_i)_{i\in {\bf c^\prime}}$ with ${\bf c^\prime}\doteq \{1,\dots,l\}\setminus \bf c$ and where $K_{g,z}$ and $K_{g,{\bf c}^\prime}$ are the corresponding parts of $K_g$ as explained above.
 To see the well-definedness of each contribution to the sum, recall that $\beta p <1$, hence the inverse powers are locally integrable.	Observe that the constant $C_{g,{\bf c},p}$ can be chosen independently of the permutation $\pi$ (by taking e.g. the maximum over permutations). 
	
	Repeating the argument for the $+$-combination $w_{n,k}^+(z^\prime, w^\prime)$ that depends on the primed variables, and putting together our three estimates \eqref{eq:est1}, \eqref{eq:est2} and \eqref{eq:est3}, we conclude: for $n$ fixed, we find constants independent of the choice of ${\bf c}$ (by taking the maxima, e.g. $\max_{{\bf c}}C_{g,{\bf c},p}$), such that  \eqref{eq:SnEstim} is estimated by
	\[
\frac {2}{n!}	\ C_0^n\ C_{g,q}^n\  \sum_{k=0}^{\lfloor \frac n 2 \rfloor} \binom{n}{k}\left(\binom{n-k}{n-2k}  k! \right)^{1/p}\ \underbrace{(C^{VdM}_g)^{\frac{n-2k}{p}} \; (C_{g,p})^{\frac k p}}_{\textstyle \leq (\underline{C}_{g,p})^{n-k}} 		\ ,
	\]
where the factor $\binom{n-k}{n-2k}$ arises from counting the number of terms in  the sum \eqref{Laplace} and the factor 2 accounts for the fact that we also have to consider the contributions with $k> n/2$. Observe that we have taken the $p$-th root from the H\"older estimate, also for the constants estimating the Vandermonde- and the Cauchy-Determinant. The constant $\underline{C}_{g,p}$ is the ($p$-th root of the) maximum of $C^{VdM}_g$ and $C_{g,p}$ and by an appropriate choice of the test function $g$ can be made smaller than 1. Hence, 
\[
(\underline{C}_{g,p})^{n-k} \leq (\underline{C}_{g,p})^{\lceil \frac n 2 \rceil} \qquad \text{ for any } 0\leq k \leq \lfloor \frac n 2 \rfloor \ .
\]
We now consider the sum of the factorials. Taking into account the $1/n!$ from the series expansion, this is
\[
\sum_{k=0}^{\lfloor \frac n 2 \rfloor} \frac 1 {n!}  \binom{n}{k}\left(\binom{n-k}{n-2k}  k! \right)^{1/p}\  =
\sum_{k=0}^{\lfloor \frac n 2 \rfloor}  \frac{1}{(n-k)!k!}\left(\frac{(n-k)!}{(n-2k)!} \right)^{1/p}\ 
\]
We can estimate this by
\[
\frac{\lfloor \frac n 2 \rfloor}{(\lfloor \frac n 2 \rfloor{!})^{1-1/p}}\,.
\]
We therefore find an estimate of the form
\[
...\leq 
	\frac{2 \lfloor \frac n 2 \rfloor}{(\lfloor \frac n 2 \rfloor!)^{1-1/p}}\  C_0^n C_{g,q}^n{(\underline{C}_{g,p})^{\lceil \frac n 2 \rceil}}
\]		
and deduce that there is a constant $C=C(g,p)$, such that 
	\[
	...\leq 	\frac{\lfloor \frac n 2 \rfloor C^{{n}}\lambda^n}{(\lfloor \frac n 2 \rfloor{!})^{1-1/p}}\,.
	\]
Observe that the dependence on $\lambda$ enters via $C_0=\lambda/2\hbar$.
	\flushleft\end{proof}

Observe that the constant in the estimate above is in general not smaller than 1. However, the factorial in the denominator dominates both this power and  the term linear in $n$  in the enumerator (use e.g. Stirling's formula) and we deduce the following result:

{\cor For an appropriate choice of the IR-cutoff function $g$ in the interaction $V$,  
the expectation value of the $S$-matrix in the coherent state $\omega_{\ph,H_v}$, for every $\ph\in\cE$, is summable for $\beta<1$, i.e. in the UV-finite regime.}
\begin{proof}
	From Theorem~\ref{prop:main} follows that the $S$ matrix is summable for  $\omega_{\ph,H}$ and $H$ differs from $H_v$ by a smooth function that can be absorbed into the redefinition of $F$. We can then choose the cutoff $f$ such that the support of $F$ is sufficiently small (as measured in the units of the scale parameter $\Lambda$) and use the same estimates as in the proof of Theorem~\ref{prop:main}.
\end{proof}

\bigskip
The above result guarantees the pointwise convergence of $\alpha_{H_v}\circ\cS\circ\alpha^{-1}_{H_v}(\lambda V)$ (i.e. the S-matrix twisted by $\alpha_{H_v}$) as a functional on $\cE$. The pointwise convergence of all the functional derivatives
follows from the fact that the $l$-th derivative of the vertex operator $V_{a}(f)$ in the direction of $\psi\in\cE$ is given by $(ia)^lV_{a}(f\psi)$, so one can use the same estimates as in Theorem~\ref{prop:main} to obtain bounds on the expressions of the form
\[
\left<\tsf{\delta^l}{\delta\ph^l}\cT_{n}^H\Big(V_{a}^{\otimes k}\otimes V_{-a}^{\otimes n-k}\Big),\psi^{\otimes l}\right>(\ph)
\]
and the convergence of the formal $S$-matrix
in the topology discussed at the end of section~\ref{sec:genpAQFT} follows.

\medskip 
Let us now discuss another possible choice of a state for  the construction of the S-matrix, which is closer to the Euclidean setting. As proposed e.g. in \cite{Wightman67}, one can introduce an auxiliary mass $m$, compute vacuum expectation value of the time-ordered products of the massive theory, and study the $m\rightarrow 0$ limit. We will adopt this strategy, also advocated by  \cite{Col75,Froe76}, in our consideration to make comparison with standard results from Euclidean theory. After constructing the time-ordered products with the Feynman propagator of the massive theory, we will show that the mass zero limit exists and that the mass zero time-ordered products satisfy certain bounds, which allow us to show the convergence of the formal $S$-matrix. Since the massless field in 2 dimensions does not possess a vacuum state, this limit is singular, leading in particular to the vanishing of all odd contributions to the $S$-matrix.
The final step, which we do not take here, would be to take the adiabatic limit of the  $S$-matrix by removing the cutoff function form the interaction term. 

Instead of directly calculating  \eqref{eq:TnHexplicit}, we therefore study a massive counterpart of auxiliary finite mass $m$ (later to be taken to zero again). In the present framework (of pAQFT), this is achieved by taking an appropriate massive Feynman propagator to define the time ordered products. A first attempt is to take the massive propagator that corresponds to $W_m$ that is the standard Wightman 2-point function,
\[
\Delta^{\rm F}_m(z)\doteq\frac 1  {2\pi}K_0(mz)\,, 
\]
where $z=\sqrt{-x^2+i\epsilon}$.  However, as the limit $m \ria 0$ of this propagator does not exist, we consider instead the propagator proposed in~\cite[eqn. (97)]{BDF} -- which comprises an additional part $h$, 
\[
h(z)\doteq\frac{1}{2 \pi}\, \ln (m/\mu) \, I_0(mz)
\]
with some real positive parameter $\mu$ and the modified Bessel function of the first kind of order 0.
The resulting propagator
\[
\widetilde{\Delta}^{\rm F}_m(x)=\Delta^{\rm F}_m(x)+h(z)=\frac{1}{2 \pi}\left(K_0(mz)+\ln (m/\mu) I_0(mz)\right)\,, \ \text{where } z=\sqrt{-x^2+i\epsilon}
\]
is well-defined for any mass $m \geq 0$ and indeed, the limit $m \ria 0$ yields our massless Feynman propagator up to an additive constant,
\[
\widetilde{\Delta}^{\rm F}_0=\Delta^{\rm F}-\frac{1}{2\pi}\ln(\mu/2)+\gamma\,,
\]
where $\gamma$ is the Euler constant.
Since $h$ is smooth, we can use it to define an equivalence relation between the two time ordered products w.r.t. $\Delta^{\rm F}_m$ and $\widetilde{\Delta}^{\rm F}_m$. Let $H_m=\Delta^{\rm F}_m-\frac i 2 (\Delta_m^{\rm R} + \Delta_m^{\rm A})$ with the corresponding massive advanced and retarded propagators. The S-matrix of the massive theory is
\[
\cS_m(\lambda \no{V})=\alpha^{-1}_{H_m} \sum_{n=0}^\infty \tfrac{1}{n!}(\tfrac{i\lambda}{\hbar})^n\ \cT^{H_m}_n(V^{\otimes n})\,.
\]
We take the vacuum expectation value of $\cS_m(\lambda \no{V})$  according to \eqref{eq:vev} in the state given by 
\[
\omega_{\varphi, H_m}(A)=\alpha_{H_m}(A)(\varphi)\, ,
\]
 To realize the modification of the Feynman propagator, we change the prescription of normal ordering and define new normal ordered operators as
\[
\no{A}\doteq \alpha_{H_m+h}^{-1}(A) \,, A\in\cF_{\mca}[[\hbar]].
\]
Note that (set $\mu=1$)
\beq\label{eq:NOmassive}
\alpha_h^{-1}(e^{ia \Phi_x})(\ph)=e^{\frac{\hbar}{4 \pi}\, a^2\, \ln (m) +ia\ph(x)}=m^{\frac{\hbar }{4 \pi}\, a^2 +ia\ph(x)}\,,
\eeq
where $\Phi_x$ is the evaluation functional (i.e. $\Phi_x(\ph)=\ph(x)$).

Hence we compute
\[
\omega_{\ph,H_m}(\cS_m(\lambda \no{V}))= \sum_{n=0}^\infty \tfrac{1}{n!}(\tfrac{i\lambda}{\hbar})^n\ \cT^{H_m}_n((\alpha_h^{-1}V)^{\otimes n})\,,
\]
and ultimately, we want to take the limit
\[
\lim_{m\rightarrow 0}\omega_{\ph,H_m}(\cS_m(\lambda \no{V}))\,.
\]
Using the formulae for the time ordering 
\eqref{eq:TnH} and
\eqref{eq:TnH_comb} and the explicit form of $\Delta^{\rm F}_m$, as well as the explicit form \eqref{eq:NOmassive} of the equivalence map $\alpha_h^{-1}$ acting on exponentials, we find
\beqa
&&\cT_n^{H_m} \left(\alpha_h^{-1}(V_{a_1}) \otimes \dots \otimes \alpha_h^{-1}(V_{a_n})\right)(g^{\otimes n})
\\ &=& 
\int
e^{i\sum_ia_i\varphi(x_i)}\, m^{\frac{\hbar}{4 \pi}\, \sum_i a_i^2}\, e^{- \frac \hbar{2 \pi}\sum_{i<j} a_i a_j K_0(m z_{ij})}f(x_1)\dots f(x_n) dx
\eeqa
where $dx\doteq dx_1\dots dx_n$, and $z_{ij}=\sqrt{-(x_i-x_j)^2+i\epsilon}$, and where the minus sign in the exponential involving the Feynman propagator $K_0$ is $i^2$ from  the functional derivative acting on the vertex operators.
Our auxiliary mass $m$ being arbitrarily small and all arguments $x_i-x_j$ being bounded by the support properties of our test functions $g$, we now use the expansion for $K_0$ for small argument, $K_0(mz)=-(\ln(mz/2)+\gamma)I_0(mz)$, 
\[
I_0(mz)=1+\textrm{ a power series in }mz\textrm{ starting with a quadratic term},
\] to rewrite the Feynman propagator $\Delta^{\rm F}_m$.
We then find, with $z_i$ and $z_{ij}$ as above,
\begin{multline*}
\cT_n^{H_m} \left(\alpha_h^{-1}(V_{a_1}) \otimes \dots \otimes \alpha_h^{-1}(V_{a_n})\right)(g^{\otimes n})
\\
=\int e^{i\sum_ia_i\varphi(x_i)} \,  m^{\frac{\hbar}{4 \pi}\, \sum_i a_i^2 }\qquad\qquad\qquad\qquad\qquad\qquad\\
\times e^{\frac \hbar {2\pi}\sum_{i<j} a_i a_j (\ln (m z_{ij}/2) + \gamma)I_0(mz_{ij})}g(x_1)\dots g(x_n) dx
\\
= \int e^{i\sum_ia_i\varphi(x_i)} \, m^{\frac{\hbar}{4 \pi}\, \left(\sum_i a_i^2 + 2 \sum_{i<j}a_ia_j I_0(mz_{ij})\right)}\qquad\quad\\
\times e^{ \frac \hbar {2\pi}\sum_{i<j} a_i a_j (\ln ( z_{ij}/2) + \gamma)I_0(mz_{ij})}g(x_1)\dots g(x_n) dx\,.
\end{multline*}
Since $I_0(mz) \ria 1$ in the limit $m \ria 0$, only those terms where $\sum_i a_i^2  + 2 \sum_{i<j}a_ia_j = (a_1 + \dots + a_n)^2=0$ survive. In particular, all the odd $n$ contributions vanish.

The remaining terms are then of the form 
\[
\cT_n^H \left(V_{a_1} \otimes \dots \otimes V_{a_n}\right)(f^{\otimes n})= \int e^{i(a_1\varphi(x_1) + \dots + a_n\varphi(x_n))} e^{- \hbar\sum_{i<j} a_i a_j\Delta_F(x_i,x_j)}g(x_1)\dots g(x_n) dx
\]
with $n$ even, $\sum_i a_i=0$ and with our Feynman propagator $\Delta^{\rm F}$ from \eqref{eq:DFexplicit}.
Hence, this is a special case of \eqref{eq:TnHexplicit}. Thus, we recover the following well-known result on the summability of the $S$-matrix in the vacuum state as a special case of our main estimate.

{\rem The estimate in Theorem~\ref{prop:main} is improved if we consider the expectation value in the singular state obtained by the limit $\lim_{m\rightarrow 0}\omega_{\ph,H_m}(\cS_m(\lambda \no{V}))$. In this case, the only non-vanishing contributions occur for even $n$ and $k=l=n/2$, and the matrix $\bf D$ is just the Cauchy matrix with no Vandermonde determinants present. Hence, exactly as in~\cite{Froe76}, no restriction on the support of the test function $g$ is necessary. We expect that a similar result can be obtained also by choosing an appropriate class of Hadamard states.}

\section{Interacting fields}
In the framework of pAQFT interacting fields are constructed with the use of the Bogoliubov formula \cite{BS,DFloop} (see also \cite{Book} for a review)
\[
F_I\doteq -i\hbar \frac{d}{d t}\cS(\lambda \no{V})^{\star -1}\star \cS(\lambda\no{V}+t\no{F})\Big|_{t=0}\,,
\]
where $F$ is a classical observables (a functional in $\cF_{\mca}$). This formula has to be understood as a formal power series in $\lambda$ 
\be\label{eq:FI}
F_I=\sum_{0}^n\frac{\lambda^n}{n!}R_n({\no{V}}^{\otimes n},\no{F})
\ee
and coefficients of this series are called retarded products and are given by
\beq\label{eq:Rn}
R_n({\no{V}}^{\otimes n},F)=\left(\tsf i \hbar\right)^n \sum_{k=0}^n \left( n \atop k \right) (-1)^k  \bar\cT_k({\no{V}}^{\otimes k})\star \cT_{n-k+1}({\no{V}}^{\otimes (n-k)}\otimes \no{F})\,,
\eeq
where $\bar\cT_k$ are the antichronological products defined as the coefficients in the expansion of the inverse (in the sense of formal power series) S-matrix, i.e.
\[
\cS(\lambda\no{V})^{\star-1}=\sum_{n=0}^{\infty}\tsf 1 {n!}\left(\tsf{-i\lambda}{\hbar}\right)^n\bar{\cT}_n({\no{V}}^{\otimes n})\,.
\]
In the standard Epstein-Glaser construction, the antichronological products are constructed inductively, together with the time ordered products. The prove of the induction step relies on the fact that one can perform the causal split of relevant distributions \cite{Scharf}, or find appropriate distributional extensions \cite{Ste71}. Here, however, we can use the ``naive'' split of the causal propagator, obtained by multiplying with the Heaviside theta distributions. It remains to prove that all the expressions obtained in this way are well defined. This is not entirely obvious (would in fact fail in higher dimensions), so we will write these expressions down explicitly.

Recall that the anti Feynman propagator is defined by
$\Delta^{\rm{AF}} = - i \Delta^{\rm D} + H = \overline{\Delta^{\rm F}}$,
\beqan\label{eq:DAFexplicit}
\Delta^{\rm{AF}}(t,\bs x)&=& 
\tsf i 4 (\theta(t-|\bs x|) + \theta(-t-|\bs x|) )
- \tsf 1 {4\pi} (\ln |t+\bs x| + \ln|t- \bs x|) 
\nonumber \\
&=&+ \tsf i 4 \theta(t^2-|\bs x|^2) 
- \tsf 1 {4\pi} \ln |t^2-\bs x^2|
\eeqan
We will use the following identity, which we prove in the appendix. Let $\Delta^{\rm D}$ denote the Dirac propagator \eqref{eq:Dirac} which defines the time ordering. Then  for $n \geq 0$, we have
	\beq\label{eq:DeltaDn}
	2^n\Delta^{\rm D}(x)^n = \theta(t) \Delta(x)^n + \theta(-t)\Delta(-x)^n
	\eeq
	where $\Delta$ denotes the causal propagator \eqref{eq:Causal}, which is a sum of two Heaviside functions.

It follows  -- 
as in  the calculation leading to \eqref{eq:calcPropVert} -- that the time ordered product of 2 vertex operators satisfies (in the sense of formal integral kernels)
\begin{multline*}
\alpha_H(\no{V_{a_1}}(x_1) \T \no{V_{a_2}}(x_2)) = e^{i a_1 \varphi( x_1) + i a_2 \varphi( x_2)}\ e^{-ia_1 a_2\Delta_D(x_1-x_2)-a_1a_2H(x_1,x_2)}
\\
\stackrel{\eqref{eq:Diracn}}{=}
e^{i a_1 \varphi( x_1) + i a_2 \varphi( x_2)-a_1a_2H(x_1,x_2)}\ 
\left(
\theta(\tau_{12})\ e^{-\frac i 2 a_1 a_2\Delta(x_1-x_2)}+\theta(-\tau_{12}))\ e^{-\frac i 2 a_1 a_2\Delta(x_2-x_1)}
\right)\\= \alpha_H\left(\no{V_{a_1}}(x_1) \star \no{V_{a_2}}(x_2)\theta(\tau_{12})+\no{V_{a_2}}(x_2) \star \no{V_{a_1}}(x_1)\theta(-\tau_{12})\right)\,.
\end{multline*}
Contrary to the typical situation in higher dimensional models \cite{Scharf}, the products of distributions in the above formula are well defined and no renormalization is needed. This generalizes to $k$-fold time-ordered products.
\begin{equation}\label{chron}
\alpha_H(\no{V_{a_1}}(x_1) \T \dots\T \no{V_{a_k}}(x_k))=
\sum_{\sigma\in S_k} V_{a_{\sigma(1)}}(x_{\sigma(1)})\star_H\dots\star_H V_{a_{\sigma(k)}}(x_{\sigma(k)})\prod_{i=1}^{n-1}\theta(\tau_{\sigma_{i}\sigma_{i+1}})\,. 
\end{equation}
where $S_k$ is the group of permutations of $k$ elements. 
Using the classic argument, we see that the antichronological product is the same, up to the order of the arguments in the Heaviside function.
Now, we apply again the identity \eqref{eq:DeltaDn} and see that
for $V=\frac 1 2 (V_a+V_{-a})$ (which is real),
\[
\alpha_H\circ\bar{\cT}_n(\no{V}(x_1)\otimes\dots\otimes \no{V}(x_n)) = \overline{
	V(x_1) \TF \dots \TF V(x_n)
}=V(x_1)\bTF\dots\bTF V(x_n)\,,
\]
where, in the second step, we used that $H$ is real and where $\bTF$ is defined
by the exponential formula 
\[
F\bTF G\doteq \mu\circ e^{\hbar D_{\rm{AF}}}(F\otimes G)\,,\quad D_{\rm{AF}}(F\otimes G)= \left\langle \Delta^{\rm{AF}}, \frac {\delta F }{\delta \varphi} \otimes \frac {\delta G}{\delta \varphi} \right\rangle\,.
\]
Hence
\[
\alpha_H\circ \cS(\lambda\no{V})^{\star-1}=e^{-i\lambda V/\hbar}_{\overline{\cT_H}}\,.
\]

\medskip

We now apply Bogolubov's formula to the case where  $F=\partial_\mu\Phi(f)$ (a component of the current) i.e. $\partial_\mu\Phi(f)(\ph)=\int \partial_\mu\ph(x)f(x)dx$.
In order to apply \eqref{eq:Rn}, first we need to compute the $l+1$-fold time-ordered products of $V^{\otimes l}$ with $\partial_\mu\Phi(f)$. Using formula \eqref{eq:TnH_comb} we obtain
\begin{multline}\label{T:of:del:Phi}
\alpha_H\circ\cT_{l+1}({\no{V}}^{\otimes l}\otimes \partial_\mu\Phi(f))=\mu\circ e^{\hbar \sum_{1\leq i<j\leq l}D_{\mathrm{F}}^{ij}}\circ e^{\hbar \sum_{i=1}^lD_{\mathrm{F}}^{i\, l+1}} (V^{\otimes l}\otimes \partial_\mu\Phi(f))\\
=\cT^H_l(V^{\otimes l})\cdot  \partial_\mu\Phi(f)+l\hbar\, \cT^H_l\left(V^{\otimes l-1}\otimes \int \tsf{\delta V}{\delta \ph(x)}\partial_\mu\Delta^{\mathrm{F}}(x,y)f(y)\right)
\end{multline}
where $D_{\mathrm{F}}^{ij}\doteq \langle\Delta^{\mathrm{F}},\frac{\delta^2}{\delta\ph_i\delta\ph_j}\rangle$. It follows that \eqref{eq:Rn} can be written as a sum of two terms.
\[
R_n({\no{V}}^{\otimes n},F)=J_n({\no{V}}^{\otimes n},F)+M_n({\no{V}}^{\otimes n},F)\,,
\]
so $F_I=J({\no{V}},F)+M({\no{V}},F)$. $J({\no{V}},F)$ is easily computed using the following lemma.
\begin{lem}
	Let $K$ be an integral kernel and let $A,B,C$ be smooth functionals, with $C$ linear, then
	\[
	\mu \circ e^{\hbar D_K}(A\otimes B\cdot C)=\mu \circ e^{\hbar D_K}(A\otimes B)\cdot C+ \mu\circ e^{\hbar D_K}\left(\left<A^{(1)},K C^{(1)}\right>\otimes B\right)\,,
	\]
	where $A^{(1)}$, $C^{(1)}$ are functional derivatives.
\end{lem}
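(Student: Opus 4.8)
The plan is to establish the identity order by order in the expansion $e^{\hbar D_K}=\sum_{n\ge 0}\frac{\hbar^n}{n!}D_K^n$ and then resum. Since $\mu\circ e^{\hbar D_K}(A\otimes\,\cdot\,)$ touches the second argument only through its functional derivatives, the whole statement reduces to the behaviour of the $n$-th derivative of the pointwise product $B\cdot C$. In terms of formal integral kernels,
\[
D_K^n\bigl(A\otimes(B\cdot C)\bigr)=\int \prod_{a=1}^n K(x_a,y_a)\,\frac{\delta^n A}{\delta\varphi(x_1)\cdots\delta\varphi(x_n)}\,\frac{\delta^n (B\cdot C)}{\delta\varphi(y_1)\cdots\delta\varphi(y_n)}\,\prod_a dx_a\,dy_a\,,
\]
so that everything hinges on computing $\frac{\delta^n(B\cdot C)}{\delta\varphi(y_1)\cdots\delta\varphi(y_n)}$.

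By the Leibniz rule this $n$-th derivative is a sum over subsets $S\subseteq\{1,\dots,n\}$ of $\frac{\delta^{|S|}B}{\prod_{i\in S}\delta\varphi(y_i)}\cdot\frac{\delta^{n-|S|}C}{\prod_{i\notin S}\delta\varphi(y_i)}$. Here I would use the hypothesis that $C$ is linear in the decisive way: $C^{(m)}=0$ for all $m\ge 2$, so only the subsets with $|S|=n$ and $|S|=n-1$ survive, and in the latter case $\frac{\delta C}{\delta\varphi(y_i)}=C^{(1)}(y_i)$ is a $\varphi$-independent kernel. This gives
\[
\frac{\delta^n(B\cdot C)}{\delta\varphi(y_1)\cdots\delta\varphi(y_n)}=C\cdot\frac{\delta^n B}{\delta\varphi(y_1)\cdots\delta\varphi(y_n)}+\sum_{i=1}^n C^{(1)}(y_i)\,\frac{\delta^{n-1}B}{\prod_{j\ne i}\delta\varphi(y_j)}\,.
\]

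Substituting this back splits $D_K^n(A\otimes(B\cdot C))$ into two families. The first reproduces $C\cdot D_K^n(A\otimes B)$, which after summation and applying $\mu$ becomes $\mu\circ e^{\hbar D_K}(A\otimes B)\cdot C$. For the second, in the term labelled by $i$ I would integrate out the pair $(x_i,y_i)$ against $\int K(x_i,y_i)C^{(1)}(y_i)\,dy_i$; invoking the symmetry of $A^{(n)}$ in its arguments and relabelling the remaining variables, each of the $n$ terms is identical and equals $D_K^{n-1}\bigl(\langle A^{(1)},KC^{(1)}\rangle\otimes B\bigr)$, with $\langle A^{(1)},KC^{(1)}\rangle$ exactly the functional appearing in the statement. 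Hence $D_K^n(A\otimes(B\cdot C))=C\cdot D_K^n(A\otimes B)+n\,D_K^{n-1}\bigl(\langle A^{(1)},KC^{(1)}\rangle\otimes B\bigr)$, and finally I would resum: the combinatorial factor $n\cdot\frac{1}{n!}=\frac{1}{(n-1)!}$ shifts the summation index, collapsing the second family into $\mu\circ e^{\hbar D_K}\bigl(\langle A^{(1)},KC^{(1)}\rangle\otimes B\bigr)$ (carrying, to be precise, one explicit factor of $\hbar$ from the single contraction between $A$ and $C$, which is what also produces the $l\hbar$ prefactor in \eqref{T:of:del:Phi}). The only genuine obstacle is this symmetrization and index-shift bookkeeping, namely verifying that all $n$ one-contraction terms coincide; well-definedness of the contractions and of the pointwise products is not an issue in the present model, since the kernels and derivatives involved are those of the explicit Sine-Gordon expressions, manipulated as formal integral kernels in the sense fixed earlier.
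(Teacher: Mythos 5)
Your proof is correct and is exactly the argument the paper intends: the paper's own proof consists of the single line ``The result follows directly from the Leibniz rule,'' and your order-by-order expansion, the use of $C^{(m)}=0$ for $m\ge 2$, the symmetrization over the $n$ one-contraction terms, and the index shift $n/n!=1/(n-1)!$ are precisely the details being elided. Your parenthetical observation that the second term should carry an explicit factor of $\hbar$ (omitted in the lemma as stated, but consistent with the $l\hbar$ prefactor in \eqref{T:of:del:Phi}) is also correct.
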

\begin{proof}
The result follows directly from the Leibniz rule. 
\end{proof}
The first contribution to $J({\no{V}},F)$ is just $\partial_\mu\Phi(f)$, since in our case $A$ is the $\star_H$ inverse of $B$. The second contribution can be expanded into time-ordered and antichronological products and using the explicit expression for the vertex operators we obtain a sum of terms of the form
\[
i\hbar \sum_{j=1}^l a_j\bar{\cT}_l^H\left(V_{a_1}(f_1)\otimes\dots\otimes  \int V_{a_j}(x_j) f_j(x_j)\partial_\mu W(x_j,y)f(y) \otimes \dots\otimes V_{a_l}(f_l)\right)\,,
\]
which are then $\star$-multiplied from the right with appropriate time-ordered products of vertex operators.

As for $M({\no{V}},F)$, we also use the explicit form of the vertex operators and obtain
\[
i\hbar \sum_{j=1}^l a_j\cT_l^H\left(V_{a_1}(f_1)\otimes\dots\otimes  \int V_{a_j}(x_j) f_j(x_j)\partial_\mu\Delta^{\mathrm{F}}(x_j,y)f(y) \otimes \dots\otimes V_{a_l}(f_l)\right)\,,
\]
again, $\star$-multiplied from the left with anti chronological products.

We see, in particular, that each contribution to $R_n$ in formula \eqref{eq:Rn} is proportional to products of $n$ vertex operators. 

We now approach the issue of convergence for the interacting fields again from two perspectives: the weak limit in a Hadamard state and the $m\rightarrow 0$ limit of massive vacuum expectation value.

We start with the latter perspective and introduce an auxiliary finite mass $m$, replace all the propagators in the formula for the interacting field by $\Delta^{\mathrm{F}}_m$ and $W_m$ respectively, and change the normal-ordering prescription by applying $\alpha_h^{-1}$ to the vertex operators. The resulting formulea differ from those for the $S$-matrix only by the occurence of factors $\partial_\mu\Delta^{\mathrm{F}}_m$ and $\partial_\mu W_m$. These, however, have well-defined limits for $m\rightarrow 0$. Therefore, we can conclude, as in section \ref{sec:S_matrix}, that only the terms with $\sum_{i=1}^n a_i=0$ contribute in the limit.  Observe that this line of argument is possible for the current $\partial_\mu\Phi_x$, but would not work for $\Phi_x$ itself as $\Delta^{\mathrm{F}}_m$ and $W_m$ are not well-defined in the limit $m \ria 0$. This is again a consequence of the well known infrared problem of massless scalar fields in 2 dimensional Minkowski spacetime.

 We can now proceed as in Theorem~\ref{prop:main}, taking into account the extra factors of $\partial_\mu\Delta^{\mathrm{F}}*f$ and $\partial_\mu W*f$. By a standard result  (see e.g. \cite[\S 9.2]{Folland}), these convolutions are actually smooth functions. Since they are then multiplied with a test function, the result is again a test function.

Finally, note that $\Delta^{\rm AF}$, $\Delta^{\rm F}$, $W$ only differ in the signs in front of the Heaviside functions. The latter did not enter in the estimate in Theorem~\ref{prop:main}, so we conclude that the same estimate as in Theorem~\ref{prop:main} holds for \eqref{eq:Rn}. To conclude, we have shown that the formal power series $\lim_{m\rightarrow 0}\omega_{\ph,H_m}(\partial_\mu\Phi(f)_I)$ is summable.

 Now we briefly discuss the other perspective which is to compute the expectation values of interacting fields in a Hadamard state. Again,  the estimates from Theorem~\ref{prop:main} to conclude that $\omega_{\ph,H_v}(\partial_\mu\Phi(f)_I)$ is summable if we choose $f$ and the cutoff  forthe interaction $V$ appropriately. Contrary to the situation with the auxiliary mass and the limit $m\rightarrow 0$, this is in fact, true even for the interacting field,  $\omega_{\ph,H_v}(\Phi(f)_I)$, since the Hadamard state is better behaved than the singular vacuum state. Our hope is that one can use these results to construct the net of interacting fields, as outlined, for example in section 4.3 of \cite{Vienna}, but this would require a careful choice of test functions and taking the algebraic adiabatic limit.

\section{Conclusion and outlook}
We have proven the convergence of the vacuum expectation value of the Epstein Glaser S-matrix and the interacting current in the finite regime of the Sine-Gordon model. To our knowledge, this is the first such proof done directly in the Lorentzian signature.

The main input into the proof of the main estimate (Theorem~\ref{prop:main})  was the fact that the retarded propagator  
$\Delta^{\rm R}$ is idempotent and bounded, and that the Hadamard function $H$ is a logarithm so that exponentials of $H$ simply give rational functions. Just like  in the earlier investigations~\cite{Froe76,DimockHurd}, this ties the method of proof to 2-dimensional theories.

We expect that it should be possible to apply the above arguments directly to 2-dimensional curved spacetimes where the 2-point function is known explicitly.
A slightly more ambitious goal is to extend our analysis to the superrenormalizable case with $1\leq \beta <2$, extending the ideas of \cite{DimockHurd}.

Most importantly, we believe that the net of interacting fields should be constructable. Our result should open up the possibility  to clarify a number of questions, such as the direct comparison with the (Wick rotated) Euclidean theory, and perhaps even on the factorizability of the $S$-matrix.

\section{Acknowledgment}
We thank David B\"ucher, Harald Grosse and John Imbrie for fruitful discussions in G\"ottingen and Oberwolfach. We also thank Klaus Fredenhagen for useful comments on an earlier version of the manuscript and a suggestion how to prove Proposition 4 in its present, more general form.

\begin{appendix}\section{Calculations}\label{app:calc}

\noindent\textbf{Proof of \eqref{eq:Hexplicit}}

\medskip
	We use the following identities
	\[
	\lim_{\eps \searrow 0} \ln(x+i\eps)=\ln(|x|) - i \pi \theta(x)
	\qquad 
	\lim_{\eps \searrow 0} \ln(x-i\eps)=\ln(|x|) + i \pi \theta(x)
	\]
	and (using $x^2=t^2-\bs{x}^2$)
	\beqa
	-\tsf  1 {4\pi} \ln(-x ^2 + i \varepsilon t)
	&=& 
	-\tsf 1 {4\pi} \ln(\underbrace{-t^2+ \boldsymbol x^2+ i \varepsilon t}_{(\boldsymbol x + (t-i\varepsilon))(\boldsymbol x - (t-i\varepsilon))}) \ = \
	-\tsf 1 {4\pi} \big( \ln(\boldsymbol x + (t-i\varepsilon))+ \ln(\boldsymbol x - (t-i\varepsilon))\big)
	\\
	&=&-\tsf 1 {4\pi} \big( \ln(\boldsymbol x + t-i\varepsilon)+ \ln(\boldsymbol x - t+i\varepsilon)\big)
\eeqa
which in the limit $\varepsilon \searrow 0$ yields
\beqa
&\ria&	-\tsf 1 {4\pi} \big( \ln(|\boldsymbol x + t|)+i\pi \theta (\boldsymbol x + t) + 
	\ln(|\boldsymbol x - t|)- i \pi \theta (\boldsymbol x - t) \big)
	\\&=&
	-\tsf 1 {4\pi} \big( \ln(|\boldsymbol x + t|) + 
	\ln(|\boldsymbol x - t|) \big)
	-\tsf i {4} \big(  \theta (\boldsymbol x + t) - \theta (\boldsymbol x - t) \big)
	\\
	&=&-\tsf 1 {4\pi} \big( \phantom{\ln(|\boldsymbol x + t|) + 
		\ln(|\boldsymbol x - t|)} \big)
	-\tsf i {4} \big(  \theta (t-|\boldsymbol x|) - \theta (-t-|\boldsymbol x|) \big)
	\eeqa
In the last step, we have used the fact that
	\[
	\theta (\boldsymbol x + t) - \theta (\boldsymbol x - t) =
	\left\{\begin{array}{ll}1 &t\geq -\boldsymbol x \text{ and }t\geq \boldsymbol x \\
	-1 &t\leq \boldsymbol x \text{ and }t\leq - \boldsymbol x\\ 
	0 & \text{otherwise}\end{array}\right.
	=
	\left\{\begin{array}{ll}1 &t\geq |\boldsymbol x| \\
	-1 &t\leq -|\boldsymbol x| 
	\\ 
	0 & \text{otherwise}\end{array}\right.
	= \theta (t-|\boldsymbol x|) - \theta (-t-|\boldsymbol x|) 
	\]
A similar computation shows that
	\[
	\theta (t-|\boldsymbol x|) - \theta (-t-|\boldsymbol x|) =\theta(t^2-{\bs{x}}^2)\ \sgn t
	\]
and
	\[
	\theta (t-|\boldsymbol x|) + \theta (-t-|\boldsymbol x|) =\theta(t^2-{\bs{x}}^2)\,.
	\]
Hence the 2-point function can be expressed as
	\[
	W(x)=-\tsf  1 {4\pi} \ln(-x^2+i\varepsilon t)=
	-\tsf  1 {4\pi}\ln(|x ^2|)+i\pi \theta(t^2-\bs{x}^2)\ \sgn t\,.
	\]
A similar reasoning shows that 
	\beqa
	-\tsf  1 {4\pi} \ln(x^2 - i \epsilon)&=& -\tsf  1 {4\pi} \ln(|x^2|)-\tsf i 4 \theta(t^2-\bs{x}^2)\\
	&=& -\tsf  1 {4\pi} \ln(|x^2|)-+\tsf i 2(	-\tsf 1 2 \theta (t-|\boldsymbol x|) -\tsf 1 2 \theta (-t-|\boldsymbol x|))\\
	&=&H+\tsf i 2 (\Delta^{\mathrm{R}}+\Delta^{\mathrm{A}})=\Delta^{\mathrm{F}}\,.
	\eeqa	

\medskip
\noindent\textbf{Proof of \eqref{eq:DeltaDn}}

	First note that $\Delta(x)=\Delta^R-\Delta^A=-\frac 1 2 (\theta(t-|\bs{x}|)-\theta(-t-|\bs{x}|))=-\frac 1 2 \sgn(t)\theta(t^2-\bs{x}^2)$, and since $\Delta$ is the antisymmetric part of the 2-point function, we deduce
	\beqa
	\theta(t) \Delta(x) + \theta(-t)\Delta(-x)
	&=&
	(\theta(t)-\theta(-t)) \Delta(x) \ = \ \sgn(t) \left( - \tsf 1 2 \right)
	\sgn(t)\theta(t^2-\bs{x}^2)
	\\ 
	&=& - \tsf 1 2\theta(t^2-x\bs{x}^2)= -\tsf 1 2 (\theta(t -|\bs{x}|) + \theta(-t-|\bs{x}|)) = 2 \Delta_D
	\eeqa
	Moreover, we have for $n\geq 0$, 
	\beqan\label{eq:Diracn}
	\theta(t) \Delta(x)^n + \theta(-t)\Delta(-x)^n
	&=&
	\left(-\tsf 1 2\right)^n \left(\theta(t) +(-1)^n \theta(-t) \right)\sgn(t)^n\theta(t^2-\bs{x}^2)^n 
	\nonumber \\
	& =& \left\{\begin{array}{ll}
		\left(-\tsf 1 2\right)^n \theta(t^2-\bs{x}^2) \phantom{\int\limits_x}& n \text{ even}\\
		\left(-\tsf 1 2\right)^n \sgn(t)^{n+1}
		\theta(t^2-\bs{x}^2)& n \text{ odd}  \phantom{\int\limits_x}
	\end{array} \right.
	\nonumber 
	\\ 
	&=& 
	(2 \Delta_D(x))^n
	\eeqan
	where in the second equality, for $n$ odd, we used $(\theta(t) +(-1)^n \theta(-t))\sgn(t)^n=\sgn(t)^{n+1}$, and in the last line, we used $- \tsf 1 2\theta(t^2-\bs{x}^2)= 2 \Delta_D$ and $\theta=\theta^n$.

\end{appendix}
{\setlength{\bibsep}{0.5ex}
\bibliographystyle{amsalpha}
\bibliography{References}}
\end{document}